\newcommand{\ot}{\otimes}
\newcommand{\tp}[1]{^{\otimes #1}}    
\DeclareMathOperator{\id}{id}
\DeclareMathOperator{\supp}{supp}
\newcommand{\Cl}{\mathbb{C}}
\newcommand{\Rl}{\mathbb{R}}
\newcommand{\Nl}{\mathbb{N}}
\newcommand{\Zl}{\mathbb{Z}}
\newcommand{\B}{\mathcal{B}}
\newcommand{\M}{\mathcal{M}}
\newcommand{\Hil}{\mathcal{H}}
\newcommand{\Om}{\Omega}
\newcommand{\om}{\omega}
\newcommand{\la}{\lambda}
\newcommand{\te}{\theta}
\newtheorem{thm}{Theorem}[section]
\newtheorem{lem}[thm]{Lemma}
\newtheorem{prop}[thm]{Proposition}
\theoremstyle{definition}
\newtheorem{example}[thm]{Example}
\newtheorem{defn}[thm]{Definition}
\theoremstyle{remark}
\numberwithin{equation}{section}
\newcommand{\CF}{\mathcal{F}}
\newcommand{\CA}{\mathcal{A}}
\newcommand{\CB}{\mathcal{B}}
\newcommand{\CL}{\mathcal{L}}
\newcommand{\CR}{\mathcal{R}}
\newcommand{\CV}{\mathcal{V}}
\newcommand{\CO}{\mathcal{O}}
\begin{document}

%
%
%
%
%
%
%
%
%

\title[Twisted Araki--Woods Algebras, the YBE, and QFT]{Twisted Araki--Woods Algebras, the Yang--Baxter Equation, and quantum field theory}
\author{Gandalf Lechner}
%
\address{%
Cauerstr. 11\\
DE 91058 Erlangen\\
Germany}
\email{gandalf.lechner@fau.de}
%
\subjclass{Primary 46L10, 81T05, 16T25; Secondary 46L54}
%
%
\dedicatory{\em 24. November 2023}
\begin{abstract}
    This article reviews recent work with Correa da Silva on twisted Araki-Woods algebras, including an introduction to twisted Fock spaces and standard subspaces. We discuss a new family of examples of that framework, coming from the set-theoretic Yang-Baxter equation, and explain the relevance of twisted Araki-Woods algebras in the construction of quantum field theoretic models.
    
    This is a contribution to the proceedings of the XL Workshop on Geometric Methods in Physics 2023 in Białowieża.
\end{abstract}
\maketitle

\section{Introduction and Overview}

The scope of the Białowieża workshops on Geometric Methods in Physics has been wide throughout the history of this long-running series, including various different areas of mathematics and physics. In line with this approach, the present chapter reviews the recently introduced twisted Araki--Woods algebras~\cite{CorreaDaSilvaLechner:2023} and highlights their connections to various topics, including quantum field theory, operator algebras, free probability, and braided vector spaces.

Twisted Araki--Woods algebras are a family of von Neumann algebras naturally represented on certain twisted Fock spaces $\CF_T(\Hil)$ built on the basis of a Hilbert space $\Hil$ (the single particle space) and a ``twist'', namely a selfadjoint operator $T$ on $\Hil\ot\Hil$ satisfying a subtle positivity condition. The twisted Araki-Woods algebras $\CL_T(H)$ then depend on two data: The twist $T$ and a specific real linear subspace $H\subset\Hil$ (a standard subspace). 

In Section~\ref{sec:twistedfock}, we will review the general formalism of twisted Fock spaces, due to Bożejko and Speicher~\cite{BozejkoSpeicher:1994} and Jørgensen, Schmitt and Werner \cite{JorgensenSchmittWerner:1995}. Depending on the field of application, the twist has various different interpretations, for instance a two-particle interaction in QFT, a deformation of a free group factor in free probability, or the braiding underlying a Nichols algebra. As an original contribution to these proceedings, and to connect to the other talks in the Yang-Baxter session of the workshop, we will in particular explain how set-theoretic solutions to the Yang-Baxter equation fit into this framework, but also discuss  examples from quantum field theory.

The twisted Araki-Woods algebras $\CL_T(H)$ are defined in Section~\ref{sec:alg}, which also contains a concise introduction to standard subspaces for the non-expert reader. Depending on the relative position of $T$ and $H$, we obtain a wide range of von Neumann algebras $\CL_T(H)$. This section reviews our recent results about cyclic and separating Fock vacuum from \cite{CorreaDaSilvaLechner:2023}, deriving the crossing symmetry from elementary particle physics and the Yang-Baxter equation from an operator-algebraic framework. We also give an account of known results on the internal structure of these algebras for special choices of $T$ and/or $H$. This review is accompanied by examples from QFT and set-theoretic solutions to the YBE. In the latter case, we give a new motivation for considering non-degenerate solutions. In Section~\ref{sec:QFT} we sketch applications to constructive quantum field theory which depend on families of twisted Araki-Woods algebras.

\section{Twisted Fock spaces in quantum physics and operator algebras}
\label{sec:twistedfock}

Fock spaces and second quantization procedures appear in many variants in physics and mathematics, linking multi-particle quantum systems, quantum field theory, operator algebras, free probability, and braided vector spaces. In this section we review some of these connections before giving the general definition of twisted Fock space that we will use.

\medskip

In quantum physics, Fock spaces describe a multi-particle system in terms of a corresponding single-particle quantum system by assigning a multi-particle Fock Hilbert space $\CF(\Hil)$ to a single-particle Hilbert space $\Hil$. According to this idea, $\CF(\Hil)$ is defined as a direct sum over ``$n$-particle spaces'', namely certain subspaces of the tensor powers $\Hil\tp{n}$, $n\in\Nl_0$, where $\Hil\tp{0}:=\Cl$. In order to account for distinguishable or indistinguishable particles with Bose/Fermi statistics, one considers different kinds of Fock spaces (including in particular symmetric/antisymmetric/unsymmetrized versions).

What makes Fock spaces mathematically interesting is that in addition to their Hilbert space structure they also have algebraic structure. For example, the unsymmetrized (also called ``full'' or ``Boltzmann'') Fock space $\CF_0(\Hil)=\bigoplus_{n=0}^\infty\Hil\tp{n}$ is the Hilbert space completion of the tensor algebra of $\Hil$, which inherits a ${}^*$-structure from the Hilbert space and acts (from the left) on $\CF_0(\Hil)$ in terms of creation and annihilation operators $a^*(\xi)$, $a(\xi)$, $\xi\in\Hil$, defined by 
\begin{align}
  a^*(\xi)\psi_1\ot\ldots\ot\psi_n := \xi\ot\psi_1\ot\ldots\ot\psi_n,\qquad n\in\Nl,\;\psi_k\in\Hil,
\end{align}
and $a(\xi):=a^*(\xi)^*$. These operators satisfy the Cuntz relations $a(\xi)a^*(\eta)=\langle\xi,\eta\rangle\cdot1$ and generate a $C^*$-algebra closely related to the Cuntz algebra $\CO_{\dim\Hil}$ \cite{Evans:1980,JorgensenSchmittWerner:1994_2}.

In a similar fashion, the Bose/Fermi Fock spaces come automatically with a representation of the CCR/CAR algebras and their respective $C^*$-closures over $\Hil$ \cite{BratteliRobinson:1997}. These operators again act by creation/annihilation operators on the Fock space, but satisfy different commutation relations as a consequence of the symmetry/antisymmetry of the Fock space. Both versions are subsumed in the $q$-deformed relations
\begin{align}\label{eq:qalg}
  a(\xi)a^*(\eta)-q\cdot a^*(\eta)a(\xi) = \langle\xi,\eta\rangle\cdot 1
  ,
\end{align}
which turn into Bosonic/Fermionic relations for $q=1$ and $q=-1$, respectively. Starting from the commutation relations \eqref{eq:qalg} for general parameter $-1\leq q\leq 1$, it is however less clear whether a Hilbert space representation exists. A proof of this fact has been given by Bożejko and Speicher \cite{BozejkoSpeicher:1991} using a $q$-twisted Fock space. This representation can be interpreted either as a generalized Brownian motion or as generalized statistics, and already shows the usefulness of going beyond the usual Bose/Fermi Fock spaces.

The $q$-deformed relations can be significantly generalized by considering quadratic exchange relations of the form
\begin{align}\label{eq:wick}
  a_ia_j^*-\sum_{k,l} T^{kl}_{ij}a_l^*a_k=\delta_{ij}1
\end{align}
(often called ``Wick algebras'' because they allow for a form of normal ordering) and asking for which coefficients $T^{kl}_{ij}$ a Hilbert space representation exists. The idea is that $a_k=a(e_k)$ should correspond to annihilation operators on some Fock space, evaluated on a vector $e_k$ from an orthonormal basis of $\Hil$, and the sum in \eqref{eq:wick} is initially only defined in case it is finite.

Taking into account that on a Fock space, the Fock vacuum vector $\Om=1\oplus0\oplus0\oplus\ldots$ induces the state $\om=\langle\Om,\,\cdot\,\Om\rangle$ on the Wick algebra and the annihilation operators should map $\Om$ to $0$, led Jørgensen, Schmitt and Werner to study Fock-type GNS representations of Wick algebras and derive criteria on the coefficients $T^{kl}_{ij}$ for their positivity \cite{JorgensenSchmittWerner:1995}. This leads to Fock spaces in which the $n$-particle spaces depend on an operator $T$ defining the coefficients $T^{kl}_{ij}$, as will be reviewed below.

Wick relations of a related but different form also appear in the work of the Zamolodchikov brothers \cite{ZamolodchikovZamolodchikov:1979} and Fadeev \cite{Faddeev:1984} on quantum integrable systems. Here the physical idea is to consider creation/annihilation type operators $Z^*(\te),Z(\te)$ representing particles on a spatial line with rapidity $\te\in\Rl$ and obeying relations of the form
\begin{align}\label{eqZ}
  Z(\te)Z^*(\te') = S(\te'-\te)\cdot Z^*(\te')Z(\te)+\delta(\te-\te')\cdot1,
\end{align}
where $S:\Rl\to\Cl$ is a given function satisfying various properties that ensure that it can be interpreted as the elastic two-body S-matrix (here for simplicity taken to be scalar). Such relations are clearly reminiscent of the quadratic Wick relations \eqref{eq:wick}, but have to be understood in terms of  distributions. That is, only some ``smeared'' form of \eqref{eqZ}, i.e. integrated in $\te$ and $\te'$ against test functions, has meaning in terms of actual operators. Due to the factor $S(\te-\te')$, this goes beyond the finite sums in \eqref{eq:wick} when $S$ is not constant. For such scenarios, it is better to define the algebra of interest directly in a Fock representation, which begs the question of how to define a suitable Fock space in the first place \cite{LiguoriMintchev:1995-1,Lechner:2003}. Once constructed, such algebras are of prominent importance in the construction and analysis of integrable quantum field theories \cite{Smirnov:1992,Lechner:2008,BostelmannCadamuro:2012}.

As Nichols algebras, twisted Fock spaces also appear in the context of braided vector spaces\footnote{Many thanks go to Leandro Vendramin for pointing this out to me.}. Here the starting point is a one-particle space $\Hil$ with a braiding, that is a bounded operator $T:\Hil\ot\Hil\to\Hil\ot\Hil$ satisfying the Yang-Baxter equation\footnote{We will use the standard tensor notation $T_k:=1^{\ot (k-1)}\ot T\ot1^{\ot(n-k-1)}\in\B(\Hil\tp{n})$ throughout.}
\begin{align}\label{eq:ybe}
  T_1T_2T_1=T_2T_1T_2.
\end{align}
One then considers the quantum symmetrizer, namely the map, $n\in\Nl$,
\begin{align}\label{eq:quantumsymmetrizer}
  P_{T,n} := \sum_{\pi\in S_n} \rho_{T,n}(\pi) \in \B(\Hil\tp{n}),
\end{align}
where $S_n$ is the symmetric group on $n$ letters with its usual Coxeter generators $\tau_i$, and $\rho_{T,n}(\tau_{i_1}\cdots\tau_{i_l}):=T_{i_1}\cdots T_{i_l}$ is well-defined for any reduced word $\tau_{i_1}\cdots\tau_{i_l}$ by Matsumoto's Theorem \cite{Matsumoto:1964}.

The Nichols algebra \cite{Nichols:1978}, a braided Hopf algebra naturally associated with the braiding~$T$, can explicitly be defined as the quotient vector space\footnote{In the purely algebraic context, $\Hil$ can be an arbitrary vector space, and the algebraic tensor product is used.} $\bigoplus_{n=0}^\infty\Hil\tp{n}/\ker P_{T,n}$, and has various applications in mathematics and physics (see, for example, \cite{Woronowicz:1989,Rosso:1998,Meir:2022}). 

This is essentially the same structure as the twisted Fock spaces introduced below. The focus for us is, however, less on the Hopf algebraic but more on the functional analytic structure. Remarkably, the Nichols algebra is also a pre-Hilbert space in a natural way because the quantum symmetrizers $P_{T,n}$ are positive operators for all $n\in\Nl$ in case $\|T\|\leq1$, as shown by Bożejko and Speicher~\cite{BozejkoSpeicher:1994}.

\bigskip

To set the stage for the following investigations, we now pass to the precise definitions. Throughout the rest of the article, $\Hil$ will denote a complex Hilbert space. Since we want to describe a family of Fock spaces over $\Hil$ that includes all the scenarios mentioned above (and many more), we will need a form of the quantum symmetrizer that can be defined without requiring the Yang-Baxter equation~\eqref{eq:ybe}. Similar to ~\cite{BozejkoSpeicher:1994}, given an operator $T\in\B(\Hil\ot\Hil)$ we define $P_{T,n}\in\B(\Hil\tp{n})$ inductively by
\begin{align}\label{eq:PTn}
  P_{T,1}=1,\qquad P_{T,n+1}=(1\ot P_{T,n})(1+T_1+T_1T_2+\ldots+T_1\cdots T_{n}).
\end{align}
Note that in case $T$ satisfies the Yang-Baxter equation, $P_{T,n}$ coincides with the quantum symmetrizer. In that case, we also have the alternative recursion relation 
\begin{align}\label{rightP}
 P_{T,n+1}=(P_{T,n}\ot 1)(1+T_n+T_nT_{n-1}+\ldots+T_n\cdots T_{1}).
\end{align}
In general, however, we have to make a choice and we here choose the ``left'' version \eqref{eq:PTn}.

\begin{defn}\label{def:twist}
	A {\em twist} is a selfadjoint operator in ${\mathcal B}(\Hil\ot\Hil)$ such that $\|T\|\leq1$ and $P_{T,n}\geq0$ for all $n\in\Nl$. A twist is called {\em strict} if $\ker P_{T,n}=\{0\}$ for all $n\in\Nl$.
\end{defn}

Given any twist $T$, we can now introduce the new scalar products $\langle\,\cdot\,,\,\cdot\,\rangle_T := \langle\,\cdot\,,P_{T,n}\,\cdot\,\rangle$ on $\Hil\tp{n}/\ker P_{T,n}$. This constitutes the definition of a twisted Fock space:

\begin{defn}
  Let $\Hil$ be a Hilbert space and $T$ a twist. The {\em twisted Fock space} is 
  \begin{align}
      \CF_T(\Hil)
      =
      \bigoplus_{n=0}^\infty \overline{\Hil\tp{n}/\ker P_{T,n}},
  \end{align}
  where the bar indicates completion w.r.t. the norm induced by the scalar product $\langle\,\cdot\,,\,\cdot\,\rangle_T = \langle\,\cdot\,,P_{T,n}\,\cdot\,\rangle$.
\end{defn}
The family of twisted Fock spaces includes all familiar types of Fock spaces. For example, the zero operator $T=0$ is easily seen to be a strict twist, with $\CF_0(\Hil)$ equal to the full Fock space over $\Hil$. As another example, consider $T=F:v\ot w\mapsto w\ot v$, the tensor flip on $\Hil\ot\Hil$. In this case, one can check that $\frac{1}{n!}P_{T,n}$ coincides with the orthogonal projection onto the symmetric subspace of $\Hil\tp{n}$, so that we get an identification with the Bosonic Fock space over $\Hil$ \cite{CorreaDaSilvaLechner:2023}. Similarly, $T=-F$ corresponds to the Fermi Fock space, and $T=qF$ to the $q$-twisted Fock space mentioned before.

In general, it is not straightforward to check whether a given operator~$T$ is a twist. However, some sufficient conditions are known, which we now summarize. Parts a) and b) are due to Jørgensen, Schmitt and Werner \cite{JorgensenSchmittWerner:1995}, and part c) is due to Bożejko and Speicher \cite{BozejkoSpeicher:1994}.

\begin{thm}\label{theorem:T}
	Let $T=T^*\in\CB(\Hil\ot\Hil)$.
	
\renewcommand{\labelenumi}{\normalfont\alph{enumi})}
	\begin{enumerate}
		\item\label{item:Tsmallnorm} If $\|T\|\leq\frac{1}{2}$, then $T$ is a strict twist.
		\item\label{item:Tpositive} If $T\geq0$, then $T$ is a strict twist.
		\item\label{item:Tbraided} If $\|T\|\leq1$ and $T$ satisfies the Yang-Baxter equation, i.e.
		\begin{align}
			T_1T_2T_1=T_2T_1T_2,
		\end{align}
		then $T$ is a twist. In case $\|T\|<1$, this twist is strict.
	\end{enumerate}
\end{thm}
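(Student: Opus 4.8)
The plan is to prove all three parts by induction on $n$, using the defining recursion \eqref{eq:PTn}, which I write as $P_{T,n+1}=(1\ot P_{T,n})R_{n+1}$ with $R_{n+1}:=1+T_1+T_1T_2+\cdots+T_1\cdots T_n$. The base cases are immediate: $P_{T,1}=1>0$, and $P_{T,2}=1+T$ is $\ge0$ whenever $\|T\|\le1$ (and $>0$ as soon as $-1\notin\sigma(T)$, in particular under each hypothesis with a strict norm bound). A preliminary lemma I would establish first is that $P_{T,n}=P_{T,n}^*$ for every selfadjoint $T$; this follows purely from the far-commutativity $T_iT_j=T_jT_i$ for $|i-j|\ge2$ together with the reversal symmetry $\pi\mapsto\pi^{-1}$ of the words produced by \eqref{eq:PTn}, and it is what makes the condition ``$P_{T,n}\ge0$'' meaningful. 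With self-adjointness in hand, the inductive step reduces in each case to a single one-step positivity assertion: assuming $P_{T,n}\ge0$, show $(1\ot P_{T,n})R_{n+1}\ge0$.

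For part c) I would follow Bożejko--Speicher. Under the Yang--Baxter equation \eqref{eq:ybe} the map $\rho_{T,n}$ is well defined on $S_n$ by Matsumoto's theorem and $P_{T,n}=\sum_{\pi\in S_n}\rho_{T,n}(\pi)$ is the quantum symmetrizer \eqref{eq:quantumsymmetrizer}. The crux is to show that $\pi\mapsto\rho_{T,n}(\pi)$ is an operator-valued positive-definite function on $S_n$, the natural generalization of the classical positive-definiteness of $\pi\mapsto q^{\ell(\pi)}$ on a Coxeter group for $-1\le q\le1$. Granting this, and using $\rho_{T,n}(\pi^{-1})=\rho_{T,n}(\pi)^*$ (as $T=T^*$), positive-definiteness with constant coefficients $\xi_i\equiv\xi$ over $S_n$ yields $n!\,\langle\xi,P_{T,n}\xi\rangle\ge0$, whence $P_{T,n}\ge0$. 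I would obtain the positive-definiteness from the Coxeter structure and the braid relation (this is exactly where \eqref{eq:ybe} enters), with $\|T\|\le1$ feeding in as $1\pm T\ge0$ at each elementary reflection. Strictness for $\|T\|<1$ follows because $1+T>0$ then propagates through the induction, whereas at $\|T\|=1$ the eigenvalue $-1$ of $T$ already produces a kernel in $P_{T,2}=1+T$, explaining why only $P_{T,n}\ge0$ is claimed there.

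For parts a) and b) the Yang--Baxter equation is unavailable, so $\rho_{T,n}$ is not multiplicative and the symmetric-group shortcut fails; here I would use Jørgensen--Schmitt--Werner type estimates directly on the quadratic form. Writing $R_{n+1}=1+S$ with $S=T_1(1\ot R_n)$ and using self-adjointness, the step reads $P_{T,n+1}=(1\ot P_{T,n})+\tfrac12\big((1\ot P_{T,n})S+S^*(1\ot P_{T,n})\big)$, so everything hinges on dominating the cross term by the positive part $1\ot P_{T,n}$. For a), the point is that $\|S\|\le\sum_{k\ge1}\|T\|^k\le\|T\|/(1-\|T\|)\le1$ once $\|T\|\le\tfrac12$; however the naive bound $\|\tfrac12((1\ot P_{T,n})S+S^*(1\ot P_{T,n}))\|\le\|P_{T,n}\|$ is too lossy, since $\|P_{T,n}\|$ grows, so the substance is a refined, geometrically dominated estimate that exploits the factorization $1\ot P_{T,n}=(1\ot1\ot P_{T,n-1})(1\ot R_n)$ to get a strictly positive lower bound, hence strictness. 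For b), with $T\ge0$ I would instead aim for the stronger claim $P_{T,n+1}\ge1\ot P_{T,n}$: the correction $(1\ot P_{T,n})T_1(1\ot R_n)$ is selfadjoint and, in the first instance, equals the non-negative $(1+T_2)T_1(1+T_2)$; the task is to produce the analogous congruence $X^*T_1X$ (with $T_1\ge0$) at every step by feeding one factor of $R_n$ out of $1\ot P_{T,n}$ symmetrically into $T_1$, which is exactly where $T\ge0$ is essential.

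The main obstacle in all cases is this one-step positivity. In c) it is the operator-valued positive-definiteness, which genuinely uses the braid relation; in a) and b) it is the absence of any such algebraic structure, forcing one to control the non-selfadjoint factor $R_{n+1}$ by hand --- via smallness of the norm in a), and via the extra order-theoretic input $T\ge0$ in b). I expect reconstructing the sharp estimate for a) (good enough to beat the growth of $\|P_{T,n}\|$) and the exact congruence bookkeeping for b) to be the most delicate parts.
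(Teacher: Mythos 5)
First, a point of reference: the paper does not prove Theorem~\ref{theorem:T} at all --- it quotes parts a), b) from \cite{JorgensenSchmittWerner:1995} and part c) from \cite{BozejkoSpeicher:1994} --- so your proposal has to be measured against those proofs. Your skeleton (induction via \eqref{eq:PTn}, the preliminary lemma $P_{T,n}=P_{T,n}^*$ from far-commutativity and word reversal, reduction to a one-step positivity claim) is the right frame, but in a) and b) you reach for the wrong tool at exactly the step you flag as delicate. For a), no domination of the cross term by $1\ot P_{T,n}$ is needed and the growth of $\|P_{T,n}\|$ is a red herring. Carry as induction hypothesis that $A:=1\ot P_{T,n}$ is boundedly invertible, and write $P_{T,n+1}=A^{1/2}MA^{1/2}$ with $M:=A^{-1/2}P_{T,n+1}A^{-1/2}$. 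By your self-adjointness lemma $M$ is selfadjoint, and $M=A^{1/2}R_{n+1}A^{-1/2}$ is similar to $R_{n+1}$, so $\sigma(M)=\sigma(R_{n+1})\subset\{z:|z-1|\le\sum_{k=1}^n\|T\|^k\le 1-2^{-n}\}$; a selfadjoint operator with spectrum in that disc satisfies $M\ge 2^{-n}$, hence $P_{T,n+1}\ge 2^{-n}A>0$ and the induction closes. Your norm bound on $S$ is exactly the needed input, but it must be fed into the spectrum of $R_{n+1}$ through this congruence, not into an operator inequality against $1\ot P_{T,n}$, which you correctly observe you cannot prove directly.

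For b), your proposed congruence $X^*T_1X$ genuinely breaks for $n\ge 3$: the factor $1\ot R_n$ is not selfadjoint there, so $(1\ot P_{T,n})T_1(1\ot R_n)$ is not of that form as written, and "feeding one factor out symmetrically" does not happen by itself. The missing ingredient is the adjoint form of the recursion, $1\ot P_{T,n}=(1\ot R_n)^*(1\ot 1\ot P_{T,n-1})$ (take adjoints in \eqref{eq:PTn} and use selfadjointness of both kernels), combined with the far-commutativity $[T_1,T_k]=0$ for $k\ge3$, which makes $T_1\ge0$ commute with $1\ot1\ot P_{T,n-1}\ge0$. Then
\begin{align*}
P_{T,n+1}-1\ot P_{T,n}
=(1\ot R_n)^*\,T_1^{1/2}\bigl(1\ot 1\ot P_{T,n-1}\bigr)T_1^{1/2}\,(1\ot R_n)\;\ge\;0,
\end{align*}
so $P_{T,n+1}\ge 1\ot P_{T,n}\ge\dots\ge 1$ and strictness follows; this is the one place where $T\ge0$ enters. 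For c), your reduction from positive-definiteness of $\pi\mapsto\rho_{T,n}(\pi)$ to $P_{T,n}\ge0$ (constant coefficients over all of $S_n$) is correct, but the positive-definiteness itself is not something you can "obtain from the Coxeter structure" in passing --- it is the main theorem of \cite{BozejkoSpeicher:1994}, proved by a nontrivial induction over parabolic subgroups, and nothing in your sketch substitutes for it; likewise strictness for $\|T\|<1$ requires their quantitative lower bound on $P_{T,n}$, not merely "$1+T>0$ propagating". As it stands, a) and b) are repairable along the lines above, while c) remains an appeal to the cited theorem rather than a proof.
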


Many examples of twists are discussed in \cite{JorgensenProskurinSamoilenko:2001,DaletskiiKalyuzhnyLytvynovProskurin:2020,CorreaDaSilvaLechner:2023}. We restrict ourselves here to present two families of examples of braided twists, i.e. twists satisfying the assumptions of Theorem~\ref{theorem:T}~c). The first class of examples connects to the other talks in the Yang-Baxter session of the workshop, and the second class is connected to applications in QFT.

\begin{example}{\bf(Set-theoretic solutions to the YBE)}\label{ex:setsols}
A set-theoretic solution of the YBE consists of a set $X$ and a map $r:X^2\to X^2$ such that $r_1r_2r_1=r_2r_1r_2$ as maps $X^3\to X^3$ in standard leg notation\footnote{That is, $r_1(x,y,z)=(r(x,y),z)$ and $r_2(x,y,z)=(x,r(y,z))$.}. Often set-theoretic solutions are written as
\begin{align}\label{eq:setYBE}
  r(x,y) = (\la_x(y),\rho_y(x)),\qquad x,y\in X,
\end{align}
with maps $\la_x,\rho_y:X\to X$. The Yang-Baxter equation can be rewritten as a set of three equations for $\la$ and $\rho$ by straightforward computation, but we will not need these here. For simplicity, we will restrict to finite sets $|X|<\infty$, but many parts of the subsequent analysis easily generalize to infinite sets.

Any set-theoretic solution can be linearized. To this end, we consider the vector space $\Hil=\ell_\Cl^2(X)$, the complex Hilbert space of square summable sequences indexed by $X$, and may naturally identify $X$ with an orthonormal basis by $x\mapsto\delta_x$. The linearisation of $r$ is defined to be the unique linear bounded operator $T:\Hil\ot\Hil\to\Hil\ot\Hil$ given by
\begin{align}
  T(x\otimes y) = (\la_x(y)\otimes\rho_y(x)),\qquad x,y\in X.
\end{align}
By definition, $T$ satisfies the linear YBE \eqref{eq:ybe}. This observation by Drinfeld \cite{Drinfeld:1992} originally motivated the study of set-theoretic solutions to the YBE, which is now a research field in its own right (see, for example, \cite{BrzezinskiColazzoDoikouVendramin:2023}).

The operator $T$ is a candidate for a twist operator according to Theorem~\ref{theorem:T}~c). However, for~$T$ to be a twist it also needs to be selfadjoint and of norm $\|T\|\leq1$. Both these properties are not automatically satisfied.

\begin{lem}
  Let $(X,r)$ be a set-theoretic solution of the YBE, and $(\Hil,T)$ its linearization. Then $T$ is selfadjoint if and only if $T$ is involutive, i.e. $T=T^{-1}$. In this case, $\|T\|=1$. Such a twist is strict if and only if $r=\id_{X^2}$.
\end{lem}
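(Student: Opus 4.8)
The plan is to work entirely in the orthonormal basis $\{\delta_x\ot\delta_y : x,y\in X\}$ of $\Hil\ot\Hil$, in which $T$ has entries in $\{0,1\}$ with a single $1$ in each column: the matrix element of $T$ between $\delta_u\ot\delta_v$ and $\delta_x\ot\delta_y$ is $1$ if $r(x,y)=(u,v)$ and $0$ otherwise. First I would read off selfadjointness from this description. Since all entries are real, $T=T^*$ is equivalent to the matrix being symmetric, i.e. to the condition that $r(x,y)=(u,v)$ holds if and only if $r(u,v)=(x,y)$. Applying this with $(u,v):=r(x,y)$ shows $r(r(x,y))=(x,y)$ for every $(x,y)$, so $r\circ r=\id_{X^2}$; conversely $r\circ r=\id$ immediately returns the symmetry condition. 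Finally, $r\circ r=\id_{X^2}$ is the same as $T^2=1$ (evaluate $T^2$ on basis vectors), i.e. $T=T^{-1}$, which proves the first equivalence.

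For the norm claim I would note that in the involutive case $r$ is in particular a bijection, so $T$ simply permutes the orthonormal basis $\{\delta_x\ot\delta_y\}$; hence $T$ is unitary and $\|T\|=1$. Alternatively, and without invoking bijectivity, one can argue spectrally: a selfadjoint $T$ with $T^2=1$ has spectrum contained in $\{-1,+1\}$ and is nonzero (as $X\neq\emptyset$), so $\|T\|=1$.

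It remains to characterise strictness, and here the key observation is that the obstruction already appears at level $n=2$. From the recursion \eqref{eq:PTn} I obtain $P_{T,2}=(1\ot P_{T,1})(1+T_1)=1+T$, so $\ker P_{T,2}=\ker(1+T)$ is exactly the $(-1)$-eigenspace of $T$. Since $T$ is a selfadjoint involution, this eigenspace is trivial if and only if $-1\notin\mathrm{spec}(T)$, i.e. if and only if $T=\id$, equivalently $r=\id_{X^2}$. Thus strictness forces $r=\id$. For the converse, if $r=\id$ then $T=\id$ satisfies the Yang--Baxter equation and every $\rho_{T,n}(\pi)$ in \eqref{eq:quantumsymmetrizer} equals the identity, whence $P_{T,n}=n!\cdot\id$ is invertible and $\ker P_{T,n}=\{0\}$ for all $n$.

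I expect the only genuinely delicate point to be the forward direction of the first equivalence: a priori $r$ is merely a self-map of $X^2$, not assumed bijective or involutive, and one must check that symmetry of the matrix of $T$ really does upgrade $r$ to an involution rather than to some weaker relation. Once this is in place, the norm and strictness statements follow quickly from elementary spectral theory together with the single computation $P_{T,2}=1+T$.
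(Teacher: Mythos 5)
Your proof is correct. The first two assertions (selfadjointness $\Leftrightarrow$ involutivity, and $\|T\|=1$) are argued essentially as in the paper: both proofs read off the symmetry of the $\{0,1\}$-matrix of $T$ in the basis $\{\delta_x\ot\delta_y\}$, deduce $r\circ r=\id_{X^2}$, and observe that an involutive $T$ permutes an orthonormal basis and is therefore unitary. Where you genuinely diverge is the strictness claim. The paper notes that a unitary involutive solution of the YBE generates a unitary representation $\rho_{T,n}$ of $S_n$, so that $\frac{1}{n!}P_{T,n}$ is the orthogonal projection onto the invariant vectors, and strictness is then equivalent to $P_{T,n}=n!$ for all $n$, hence to $T=1$. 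You instead isolate the obstruction at the single level $n=2$, computing $P_{T,2}=1+T$ from the recursion \eqref{eq:PTn} and using the spectral theorem for a selfadjoint involution to conclude $\ker(1+T)=\{0\}\Leftrightarrow T=1$; the converse for general $n$ is then trivial since $T=1$ gives $P_{T,n}=n!\cdot\id$. Your route is more elementary --- it avoids Matsumoto's theorem and the identification of the quantum symmetrizer with a group average --- at the cost of not exhibiting the structural fact (used elsewhere in the paper) that for involutive twists the $n$-particle spaces are exactly the $S_n$-invariant subspaces. Both arguments are complete; your worry about the forward direction of the first equivalence is resolved exactly as you describe, since specializing the symmetry condition to $(u,v)=r(x,y)$ already forces $r$ to be an involution and in particular a bijection.
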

\begin{proof}
  If $T=T^*$ is selfadjoint, we have by definition of $T$ and $\Hil$ for any $x,y,x',y'\in X$
  \begin{align*}
    \delta_{(x',y'),r(x,y)}
    =
    \langle x'\ot y',T(x\ot y)\rangle
    =
    \langle T(x'\ot y'),x\ot y\rangle
    =
    \delta_{r(x',y'),(x,y)}
    .
  \end{align*}
  This immediately implies $r^{-1}(\{(x,y)\})=\{r(x,y)\}$ for any $(x,y)\in X^2$, so $r$ is bijective with $r^{-1}=r$. Hence also the linearization $T$ is invertible and involutive.

  If, on the other hand, $T=T^{-1}$ is involutive, it maps the orthonormal basis $\{x\ot y\,:\,x,y\in X\}$ onto itself. Hence $T$ is unitary, and in view of $T=T^{-1}=T^*$ also selfadjoint. Clearly unitary solutions have norm $\|T\|=1$.

  A unitary involutive solution $T$ of the YBE generates a unitary representation $\rho_{T,n}$ of the symmetric group $S_n$ on $\Hil\tp{n}$ by sending the transposition $\tau_i\in S_n$ to $T_i\in\CB(\Hil\tp{n})$. This implies that up to a factor $n!$, the quantum symmetrizer \eqref{eq:quantumsymmetrizer} coincides with the orthogonal projection onto the subspace of $\Hil\tp{n}$ consisting of all vectors invariant under this representation. Hence $T$ being strict, namely $\ker P_{T,n}=\{0\}$, is equivalent to $P_{T,n}=n!$, which in turn is equivalent to $T=1$, i.e. $r=\id_{X^2}$.
\end{proof}

Involutive solutions to the set-theoretic YBE are a subject of current research in the set-theoretic setting \cite{EtingofSchedlerSoloviev:1999,Gateva-IvanovaVandenBergh:1998,Rump:2007,CedoJespersOkninski:2014}. Up to a natural equivalence given by the $S_\infty$-representations they generate, they have been classified in \cite{LechnerPennigWood:2019}.
\end{example}

\begin{example}{\bf(Solutions to the YBE with spectral parameter \cite{CorreaDaSilvaLechner:2023})}\label{ex:specpar}
  The second class of examples arises from the Yang-Baxter equation with spectral parameter, as they appear in integrable quantum field theories \cite{AbdallaAbdallaRothe:2001}. We consider a finite-dimensional complex Hilbert space $V$ and the one-particle space $\Hil:=L^2(\Rl\to V,d\te)\cong L^2(\Rl,d\te)\ot V$. Given any measurable function $S:\Rl\to\B(V\ot V)$ bounded by $\|S\|_\infty\leq 1$ that satisfies the YBE with spectral parameter, namely
  \begin{align*}
    S(\te)_1S(\te+\te')_2S(\te')_1
    =
    S(\te')_2S(\te+\te')_1S(\te)_2,\qquad \te,\te'\in\Rl,
  \end{align*}
  and $S(\te)^*=S(-\te)$, we consider
  \begin{align}\label{eq:TS}
    T_S:\Hil\ot\Hil\to\Hil\ot\Hil,\qquad 
    (T_S\psi)(\te_1,\te_2):=S(\te_2-\te_1)\psi(\te_2,\te_1).
  \end{align}
  Thanks to the properties of $S$, this is a selfadjoint operator of norm $\|T\|=\|S\|_\infty\leq1$ satisfying the YBE on $\Hil\tp{3}$, i.e. a twist. 
  
  In the context of QFT, $S$ models a relativistic elastic two-particle interaction, and the Yang-Baxter equation is a consistency condition to allow for a consistent factorization of a three-particle scattering process into three two-particle processes. For concrete examples, see \cite{AbdallaAbdallaRothe:2001}.
  \hfill$\square$
\end{example}

On any twisted Fock space $\CF_T(\Hil)$, we have a natural unital ${}^*$-algebra of creation/annihilation type  operators. Denoting the quotient map $\Hil\tp{n}\to\Hil\tp{n}/\ker P_{T,n}$ by $[\cdot]$, we set
\begin{align}
  a_{L,T}^\star(\xi)[\Psi_n] := [\xi\ot\Psi_n],\qquad \xi\in\Hil,\;\Psi_n\in\Hil\tp{n},
\end{align}
and extend to a densely defined operator in $\CF_T(\Hil)$ by linearity. Here ``$L$'' reminds us that we are working with the ``left'' version of the $P_{T,n}$. The star ${}^\star$ will always be used to indicate adjoints w.r.t. the $T$-dependent scalar product of $\CF_T(\Hil)$, and we write $a_{L,T}(\xi):=a_{L,T}^\star(\xi)^\star$ as usual.

As we shall see below, various properties of the operators $a^\#_{L,T}(\xi)$ (for operators $A$, we write $A^\#$ to denote either $A$ or $A^\star$) and certain von Neumann algebras generated by them differ sharply depending on whether we have $\|T\|=1$ or $\|T\|<1$. A first indication of these two regimes is that in the braided case, $T$ is strict for $\|T\|<1$ (Theorem~\ref{theorem:T}~c)). In general, a useful intuition to have is that in the extreme case $T=0$, we are presented with an ``extremely noncommutative'' free algebra (tensor algebra). The case $\|T\|=1$, on the other hand, includes in particular $T=F$ (the tensor flip). This yields the CCR algebra and corresponding local quantum field theories, which are intuitively speaking ``much more commutative''.

On a technical level, an important difference is that $a^\#_{L,T}(\xi)$ is bounded for $\|T\|<1$ \cite{BozejkoSpeicher:1994}, but unbounded for $\|T\|=1$ (unless $T=-F$, where the CAR relations imply boundedness).

\section{Localized von Neumann algebras and standard subspaces}
\label{sec:alg}

Given any twist $T$, the twisted Fock space $\CF_T(\Hil)$ construction provides us with the unital ${}^*$-algebra generated by $a_{L,T}^\#(\xi)$, $\xi\in\Hil$ and the Fock vacuum vector $\Om\in\CF_T(\Hil)$. In the following we want to use these operators to generate certain von Neumann algebras, generically denoted $\CA\subset\CB(\CF_T(\Hil))$ for the time being, such that $\Om$ is cyclic (meaning that $\CA\Om\subset\CF_T(\Hil)$ is dense) and separating (meaning that $\CA\ni A\mapsto A\Om\in\CF_T(\Hil)$ is injective).

A pair $(\CA,\Om)$ consisting of a von Neumann algebra with cyclic separating vector is the starting point of Tomita-Takesaki modular theory \cite{Takesaki:2003} and also of central importance in algebraic quantum field theory, where these properties are consequences of the basic principles of Einstein locality and positivity of the energy \cite{Haag:1996}.

As $a_{L,T}(\xi)\Om=0$, the Fock vacuum $\Om$ does not separate any algebra containing $a_{L,T}(\xi)$, $\xi\neq0$. We therefore consider the Segal type field operator
\begin{align}
  \phi_{L,T}(\xi) := a_{L,T}^\star(\xi)+a_{L,T}(\xi),\qquad\xi\in\Hil.
\end{align}
This operator is selfadjoint and bounded for $\|T\|<1$ and essentially selfadjoint on the subspace of finite particle vectors for $\|T\|=1$.

The von Neumann algebras that we want to study are of the form
\begin{align}\label{eq:LTH}
  \CL_T(H) := \{\phi_{L,T}(h)\,:\,h\in H\}''.
\end{align}
Since $\xi\mapsto\phi_{L,T}(\xi)$ is {\em real} linear because $a_{L,T}^\star(\xi)$ depends linearly, but $a_{L,T}(\xi)$ depends antilinearly on $\xi$, the set $H\subset\Hil$ can be taken to be a real linear subspace. In view of $a_{L,T}(h)=\frac12(\phi_{L,T}(h)+i\phi_{L,T}(ih))$, we have (at least) to choose $H$ in such a way that $H\cap iH=\{0\}$, otherwise $\Om$ will not be separating for $\CL_T(H)$. On the other hand, $\CL_T(H)\Om$ contains all the one-particle vectors $h_1+ih_2$, with $h_1,h_2\in H$, so that we are led to require that $H+iH\subset\Hil$ is dense to ensure cyclicity. We will therefore consider standard subspaces.

\begin{defn}
  A {\em standard subspace} is a closed real linear subspace $H\subset\Hil$ such that $H+iH$ is dense in $\Hil$ and $H\cap iH=\{0\}$.
\end{defn}

Given a twist $T\in\CB(\Hil\ot\Hil)$ and a standard subspace $H\subset\Hil$, the associated {\em twisted Araki-Woods algebra} is the von Neumann algebra defined in \eqref{eq:LTH}.

\subsection{Standard subspaces}

Simple examples of standard subspaces are $\Rl^n$ as a real subspace of $\Cl^n$, or the real-valued functions in $L^2(\Rl^n\to\Rl)$ as a real subspace of $L^2(\Rl\to\Cl)$. Slightly more generally, one may consider an orthonormal basis $(e_n)_{n\in\Nl}$ of a Hilbert space $\Hil$ and define  $H$ as the closure of the real linear span of this orthonormal basis, which clearly is a standard subspace. Such standard subspaces are called {\em maximally abelian} in the literature \cite{Longo:2008}. We mention as an aside that for twist $T=0$ and a maximally abelian standard subspace $H$, the twisted Araki-Woods algebra $\CL_0(H)$ is isomorphic to the group von Neumann algebra of the free group on $\dim\Hil$ generators \cite{Voiculescu:1985}. This explains the relevance of the algebras $\CL_T(H)$ as deformations of free group factors in free probability \cite{Shlyakhtenko:1997,Hiai:2003}.

Maximally abelian standard subspaces are however very special examples. To understand the general structure of standard subspaces, we consider the {\em Tomita operator} $S_H$ of $H\subset\Hil$, that is the map
\begin{align*}
  S_H:H+iH\to\Hil,\qquad h_1+ih_2\mapsto h_1-ih_2.
\end{align*}
This is a closed antilinear involution, and its polar decomposition $S_H=J_H\Delta_H^{1/2}$ consists of an antiunitary involution $J_H$ (modular conjugation) and a strictly positive typically unbounded operator $\Delta_H^{1/2}>0$ (modular operator) satisfying the {\em modular condition}
\begin{align}\label{eq:JD}
  J_H\Delta_H^{1/2}J_H=\Delta_H^{-1/2}.
\end{align}
Any pair $J_H,\Delta_H^{1/2}$ of operators satisfying these conditions defines a unique standard subspace $H=\ker(J_H\Delta_H^{1/2}-1)$. The aforementioned maximally abelian standard subspaces are characterized by $\Delta_H^{1/2}=1$ \cite{Longo:2008}.

Standard subspaces come in pairs: With a standard subspace $H$, also its symplectic complement $H':=\{\psi\in\Hil\,:\,{\rm Im}\langle\psi,h\rangle=0\,\forall h\in H\}$ is a standard subspace, and $H''=H$. A basic fact about standard subspaces is a variant of Tomita's Theorem, expressing that the modular unitaries act as automorphisms of $H$, and $J_H$ exchanges $H$ and $H'$:
\begin{align}
  \Delta_H^{it}H=H,\qquad t\in\Rl,\qquad\qquad J_HH=H'.
\end{align}

Given a von Neumann algebra $\M$ on some Hilbert space $\Hil$ with a vector $\Om$, the set $H:=\{A\Om\,:\,A=A^*\in\M\}^-$ is a standard subspace if and only if $\Om$ is cyclic and separating for $\M$. Thus standard subspaces appear very naturally in the context of von Neumann algebras, and the symplectic complement $H'$ of a standard subspace plays the role of the commutant $\M'$ of a von Neumann algebra $\M$.

In quantum field theory, standard subspaces can be used to encode localization regions. We restrict ourselves to an example from Minkowski space. 

\begin{example}[\bf QFT examples of standard subspaces]\label{example:bw}
  Consider the test function space $C_c^\infty(\Rl^d)$ on $d$-dimensional Minkowski space, $d\geq1+1$, and $\CO\subset\Rl^d$ a localization region (a set with interior points such that its causal complement $\CO'$ also has interior points).
  
  Fixing a mass parameter $m>0$, we consider the Hilbert space $\Hil=L^2(\Rl^{d-1},(\|p\|^2+m^2)^{-1/2}dp)$ carrying the usual spin zero mass $m$ irreducible positive energy representation of the Poincaré group. We then use the map 
  \begin{align*}
    C_c^\infty(\Rl^d) \ni f \longmapsto f^+ \in\Hil,\qquad f^+(p):=\tilde f(\sqrt{\|p\|^2+m^2},p)
  \end{align*}
  (with $\tilde f$ the Fourier transform of $f$), associating test functions with single particle states (or solutions to the Klein-Gordon equation), and
  \begin{align}
    H(\CO) := \{f^+\,:\,\supp f\subset\CO\}^- \subset \Hil.
  \end{align}
  Then $H(\CO)$ is a standard subspace. This is a consequence of the Reeh-Schlieder property of the vacuum \cite{ReehSchlieder:1961} and can be proven as a special case of a one-particle version of the Reeh-Schlieder Theorem \cite{StreaterWightman:1964}.
  
  The physical interpretation is that elements of $H(\CO)$ are localized in the spacetime region $\CO$ in the sense of being excitations of the vacuum by observables localized in $\CO$. From this point of view, standard subspaces can be seen as an abstract notion of localization region.
  
  In most cases, an explicit description of the modular data $J_{H(\CO)}, \Delta_{H(\CO)}$ of $H(\CO)$ is not known. The most prominent case in which these operators are known and act geometrically \cite{BuchholzDreyerFlorigSummers:2000} is the case where the region $\CO$ is the Rindler wedge
  \begin{align*}
   \CO=W=\{x=(x_0,x_1,\ldots,x_{d-1})\,:\,x_1>|x_0|\},
   \end{align*}
  or a Poincaré transform thereof. 
  
  In this case the modular unitaries $\Delta_{H(W)}^{it}$ act as Lorentz boosts in the $x_1$-direction (Bisognano-Wichmann Theorem \cite{BisognanoWichmann:1975,BisognanoWichmann:1976}). For recent generalizations of Bisognano-Wichmann results to representations of Lie groups on homogeneous spaces, see \cite{MorinelliNeeb:2021,MorinelliNeebOlafsson:2022_2}.
  
  For later reference, we mention that in the case of dimension $d=2$, the wedge standard subspace can be reformulated as follows. Changing variables from $p$ to $p=m\sinh\te$, one finds that in the Hilbert space $\Hil=L^2(\Rl,d\te)$, the operators
  \begin{align*}\label{hardyjd}
    (\Delta^{it}\psi)(\te)=\psi(\te-2\pi t),\qquad (J_H\psi)(\te)=\overline{\psi(\te)}
  \end{align*}
  define a standard subspace $H$. Concretely, $H$ is given by those $L^2$-functions~$h$ that have an analytic  continuation to functions in the Hardy space $H^2({\mathbb S}_\pi)$ on the strip ${\mathbb S}_\pi=\{\te\in\Cl\,:\,0<{\rm Im}\te<\pi\}$ and satisfy the symmetry condition $\overline{h(\te+i\pi)}=h(\te)$, $\te\in\Rl$ \cite{LechnerLongo:2014}.
\end{example}

\subsection{Crossing Symmetry and Yang-Baxter Equation}

Fixing a standard subspace $H\subset\Hil$ and a compatible twist $T$, we now review known results about the twisted Araki-Woods algebra $\CL_T(H)$ in an abstract setting. We begin with the question when the Fock vacuum $\Om$ is cyclic and separating as a basic prerequisite for both, modular theory and applications in QFT.

To this end, we call the pair $(H,T)$ {\em compatible} if
\begin{align}
  [T,\Delta_H^{it}\ot\Delta_H^{it}]=0,\qquad t\in\Rl.
\end{align}
Under this basic symmetry requirement, it turns out that $\Om$ is separating for~$\CL_T(H)$ (it is always cyclic) if and only if two conditions are satisfied: The Yang-Baxter equation and a crossing symmetry condition, defined as follows.

\begin{defn}{\bf \cite{CorreaDaSilvaLechner:2023} (crossing symmetry)}\label{def:crossing}\\
	Let $H\subset\Hil$ be a standard subspace. A bounded operator $T\in\B(\Hil\ot\Hil)$ is called {\em crossing-symmetric w.r.t. $H$} if for all $\psi_1,\ldots,\psi_4\in \Hil$, the function
	\begin{align}\label{eq:crossing-function}
		T^{\psi_2,\psi_1}_{\psi_3,\psi_4}(t)
		:=
		\langle\psi_2\otimes \psi_1,\,(\Delta_H^{it}\otimes 1)T(1\otimes \Delta_H^{-it})(\psi_3\otimes \psi_4)\rangle
	\end{align}
	has a continuous and bounded extension to the strip $\Rl+i[0,\frac12]$ which is analytic in $\Rl+i(0,\frac12)$, and satisfies the boundary condition, $t\in\Rl$,
	\begin{equation}\label{eq:Crossingboundary}
		T^{\psi_2,\psi_1}_{\psi_3,\psi_4}(t+\tfrac{i}{2})
		=
		\langle \psi_1\otimes J_H\psi_4,\,(1\otimes \Delta_H^{it})T(\Delta_H^{-it}\otimes 1)(J_H\psi_2\otimes \psi_3)\rangle
		.
	\end{equation}
\end{defn}

It is straightforward to check that multiples of the flip, i.e. $T=qF$, $-1\leq q\leq 1$, satisfy crossing symmetry (the functions $T^{\psi_2,\psi_1}_{\psi_3,\psi_4}$ are constant in this case), and multiples of the identity, i.e. $T=q1$, violate crossing symmetry. In general, crossing symmetry is a subtle condition which asks that a) the analytic continuation mentioned above exists, b) the boundary values at $\Rl+\frac{i}{2}$ are given by a bounded operator (not just a quadratic form), and c) the symmetry condition \eqref{eq:Crossingboundary} holds.

Crossing symmetry is an abstraction of the crossing symmetry of scattering of elementary particles (relating the scattering of particles and antiparticles), and reminiscent of the KMS condition in the description given above. This is an indication that it is related to separating properties, and indeed the following theorem holds:

\begin{thm}{\bf \cite{CorreaDaSilvaLechner:2023}}\label{thm1}
  Let $H\subset\Hil$ be a standard subspace and $T$ a compatible twist. Then $\Om$ is cyclic and separating for $\CL_T(H)$ if and only if $T$ is crossing symmetric w.r.t. $H$ and satisfies the Yang-Baxter Equation.
\end{thm}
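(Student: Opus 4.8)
The plan is to read the two conditions in the statement as the operator-algebraic shadows of Tomita--Takesaki modular theory. Cyclicity of $\Om$ is automatic: applying products $\phi_{L,T}(h_1)\cdots\phi_{L,T}(h_n)$ to $\Om$ produces $[h_1\ot\cdots\ot h_n]$ up to lower-particle contributions, and since $H+iH$ is dense these vectors span a dense subspace of each $n$-particle space $\overline{\Hil\tp n/\ker P_{T,n}}$. Hence the real content is the equivalence of the \emph{separating} property with crossing symmetry together with the Yang--Baxter equation. Throughout, the compatibility condition $[T,\Delta_H^{it}\ot\Delta_H^{it}]=0$ is what lets the second quantization $\Gamma(\Delta_H^{it})$ descend to a unitary one-parameter group on $\CF_T(\Hil)$ fixing $\Om$, providing the candidate modular flow.

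For the direction assuming crossing symmetry and the Yang--Baxter equation, I would construct an explicit candidate for (part of) the commutant. The Yang--Baxter equation makes the ``right'' recursion \eqref{rightP} available, so that right creation operators $a_{R,T}^\star(\xi)[\Psi_n]:=[\Psi_n\ot\xi]$ and the corresponding right fields $\phi_{R,T}(\xi)=a_{R,T}^\star(\xi)+a_{R,T}(\xi)$ are well defined on the quotient $\Hil\tp n/\ker P_{T,n}$; set $\CR_T(H'):=\{\phi_{R,T}(h'):h'\in H'\}''$. The key computation is the braided commutation of left and right operators: moving a right annihilation through a left creation produces a single contraction term governed by $T$ and the modular data, and one checks that crossing symmetry \eqref{eq:Crossingboundary} together with the symplectic orthogonality $\mathrm{Im}\langle h,h'\rangle=0$ for $h\in H$, $h'\in H'$ makes this term cancel, so that $[\phi_{L,T}(h),\phi_{R,T}(h')]=0$ and hence $\CR_T(H')\subseteq\CL_T(H)'$. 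Running the cyclicity argument for the symplectic complement---legitimate because $H'+iH'$ is dense, equivalently $H\cap iH=\{0\}$---shows $\Om$ is cyclic for $\CR_T(H')$, hence for $\CL_T(H)'$, which is exactly the statement that $\Om$ separates $\CL_T(H)$.

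For the converse, I would assume $\Om$ cyclic and separating and invoke modular theory to produce $S_\Om=J_\Om\Delta_\Om^{1/2}$ on $\CF_T(\Hil)$. Evaluating $S_\Om$ on one-particle vectors $\phi_{L,T}(h)\Om=[h]$, self-adjointness of the fields gives $S_\Om[h]=[h]$ for $h\in H$, so the one-particle restriction of $S_\Om$ is the standard-subspace Tomita operator $S_H$; thus $\Delta_\Om$ and $J_\Om$ restrict to $\Delta_H$ and $J_H$. Using the quasi-free Fock structure and compatibility, one identifies $\Delta_\Om^{it}=\Gamma(\Delta_H^{it})$ and $J_\Om$ with a twisted second quantization of $J_H$. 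The modular invariance $\Delta_\Om^{it}\CL_T(H)\Delta_\Om^{-it}=\CL_T(H)$ forces the three-particle consistency relation $T_1T_2T_1=T_2T_1T_2$ (needed for the right recursion to be well defined and for the associated braid relations to close), while the KMS boundary condition for the generating two- and four-point functions of $\om=\langle\Om,\,\cdot\,\Om\rangle$ is precisely the analyticity in the strip $\Rl+i[0,\tfrac12]$ and the boundary identity \eqref{eq:Crossingboundary} defining crossing symmetry w.r.t.\ $H$ in Definition~\ref{def:crossing}.

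The principal obstacle is analytic and domain-theoretic rather than algebraic. For $\|T\|=1$ the fields $\phi_{L,T}(\xi)$ are unbounded and only essentially self-adjoint on the finite-particle core, so every formal manipulation---the braided commutation of left and right operators, the application of the modular group, and especially the analytic continuation of the $n$-point functions into the open strip---must be justified on a common core and then upgraded to genuine operator identities. The subtlest point is the one highlighted after Definition~\ref{def:crossing}: one must establish not merely that the crossing function continues analytically, but that its boundary values at $\Rl+\tfrac{i}{2}$ are implemented by a \emph{bounded} operator, so that the KMS structure closes up at the level of von Neumann algebras rather than only as quadratic forms.
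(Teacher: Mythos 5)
Your ``if'' direction is essentially the strategy of the proof in \cite{CorreaDaSilvaLechner:2023} (the present paper only cites it, but the surrounding discussion --- the formulas for $J_{T,H},\Delta_{T,H}$ and the identity $\CL_T(H)'=\CR_T(H')$ --- confirms this is the route taken): the YBE reconciles the left recursion \eqref{eq:PTn} with the right one \eqref{rightP} so that right fields exist on $\CF_T(\Hil)$, crossing symmetry gives $[\phi_{L,T}(h),\phi_{R,T}(h')]=0$ on the finite-particle core, and cyclicity of $\Om$ for $\CR_T(H')\subset\CL_T(H)'$ yields the separating property. One caveat: the cancellation of the contraction terms does not follow from the symplectic orthogonality $\mathrm{Im}\langle h,h'\rangle=0$ alone; it uses the modular relation $H'=J_HH$ together with strip analyticity of $t\mapsto\Delta_H^{it}h$ for a dense set of $h\in H$, which is precisely where the analytic-continuation half of Definition~\ref{def:crossing} is consumed.

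The genuine gap is in the converse. You assert that the invariance $\Delta_\Om^{it}\CL_T(H)\Delta_\Om^{-it}=\CL_T(H)$ forces $T_1T_2T_1=T_2T_1T_2$. It cannot: invariance of a von Neumann algebra under its own modular group is automatic, and compatibility alone already makes $\Gamma(\Delta_H^{it})$ a vacuum-fixing unitary group implementing automorphisms of $\CL_T(H)$ for \emph{every} compatible twist, braided or not. No three-particle relation can be extracted from this. The YBE has to come from the commutant instead: separability of $\Om$ is equivalent to its cyclicity for $\CL_T(H)'$, and intertwining commutant elements with products of left fields forces the commutant to contain operators acting as right creation operators on the quotients $\Hil^{\otimes n}/\ker P_{T,n}$; the well-definedness of these on the quotient (equivalently, the consistency of \eqref{eq:PTn} with \eqref{rightP}) is what produces the Yang--Baxter equation. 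Relatedly, your identification $\Delta_\Om^{it}=\Gamma(\Delta_H^{it})$ is asserted rather than derived: compatibility only yields that $\Gamma(\Delta_H^{it})$ commutes with $J_\Om$ and $\Delta_\Om$, and pinning down the modular data on $n$-particle vectors requires exactly the crossing/YBE information you are trying to establish, so using that identification to ``read off'' the boundary condition \eqref{eq:Crossingboundary} is circular as written. The crossing symmetry should instead be extracted from the KMS property of $\om$ at the level of two- and four-point functions of the fields, with the modular operator identified only on the one-particle space (where, as you correctly note, $S_\Om$ restricted to $[H+iH]$ closes to $S_H$).
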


Whereas usually the Yang-Baxter equation and crossing symmetry are {\em assumed} for building certain models, this theorem {\em derives} both these structures from a general operator-algebraic framework. It might also explain why out of the very many existing twists (see Theorem~\ref{theorem:T}), the braided twists have received most attention.

Once the conditions in Theorem~\ref{thm1} are satisfied, $\Om$ is cyclic and separating for $\CL_T(H)$ and hence $\{A\Om\,:\,A=A^*\in\CL_T(H)\}^-$ is a standard subspace in $\CF_T(\Hil)$, defining modular data $J_{T,H}$, $\Delta_{T,H}$. These are related to $J_H,\Delta_H$ by 
\begin{align}
  J_{T,H}\psi_1\ot\ldots\ot\psi_n &= J_H\psi_n\ot\ldots\ot J_H\psi_1,
  \\
  \Delta^{it}_{T,H}\psi_1\ot\ldots\ot\psi_n &= \Delta^{it}_H\psi_1\ot\ldots\ot \Delta^{it}_H\psi_n,
\end{align}
for $\psi_i\in\Hil$, $n\in\Nl$. We refer to \cite{CorreaDaSilvaLechner:2023} for details.

The modular conjugation $J_{T,H}$ also allows us to determine the commutant $\CL_T(H)'$ of $\CL_T(H)$. It is given by a ``right'' version of the ``left'' von Neumann algebra $\CL_T(H')$: Recall that in the initial construction of the twisted Fock space, we had a choice between a left and right version \eqref{rightP} for the definition of the kernels $P_{T,n}$. In case $T$ satisfies the YBE, both versions agree. In that case, $\CF_T(\Hil)$ also carries creation/annihilation type operators that act from the right instead of the left. We are therefore in a position to also consider the corresponding right versions $\CR_T(H)$ of $\CL_T(H)$. The modular conjugation then implements a duality between these two, namely 
\begin{align}
  \CL_T(H)' = J_{T,H}\CL_T(H)J_{T,H} = \CR_T(H').
\end{align}
Again, we refer to \cite{CorreaDaSilvaLechner:2023} for a more detailed discussion and proofs.

We now revisit Example~\ref{ex:setsols} and Example~\ref{ex:specpar} in connection with crossing symmetry.

\begin{example}{\bf (Crossing-symmetric set-theoretic solutions of the YBE)}\\
  This example is a continuation of Example~\ref{ex:setsols} about set-theoretic solutions to the YBE. In that example, we considered set theoretic solutions $r:X^2\to X^2$ without any additional requirements, but found that for them to linearize to twists, $r$ has to be bijective and indeed involutive.
  
  Set-theoretic solutions are often studied under the additional assumption that they are {\em non-degenerate}, namely that the left and right projections $\la_x$, $\rho_x$ of $r$ (cf.~\ref{eq:setYBE}) are bijections of $X$ for any $x$. While this assumption is not satisfied in general (a trivial counterexample is $r=\id_{X^2}$), it does allow to use powerful tools from group theory and algebra (braces) when it is available \cite{Rump:2007,CedoJespersOkninski:2014,GuarnieriVendramin:2017}.

  Note that to the linearization $(\Hil,T)$ of a set-theoretic solution $(X,r)$, we may naturally associate a standard subspace. Namely, we pick an involutive bijection $j:X\to X$. Then the antilinear extension of $j$ is an antiunitary operator $J_H$ on $\Hil$, and setting $\Delta_H:=1$ we obtain a (maximally abelian) standard subspace $H=\ker(1-J_H)$ with modular data $J_H$ and $\Delta_H=1$, which is trivially compatible with $T$.

\begin{prop}
  Let $(X,r)$ be a set-theoretic solution to the YBE such that its linearization $(\Hil,T)$ is crossing symmetric w.r.t. a standard subspace $H$ of the form described above. Then $r$ is non-degenerate and 
  \begin{align}
    \rho_x = j\la_x^{-1}j,\qquad x\in X.
  \end{align}
\end{prop}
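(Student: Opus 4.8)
The plan is to exploit the fact that the standard subspace $H$ here has trivial modular operator, $\Delta_H=1$, so that $J_H$ is the antilinear extension of the involution $j$ and acts on basis vectors by $J_H\delta_x=\delta_{j(x)}$. With $\Delta_H^{it}=1$ for all $t$, the crossing function of Definition~\ref{def:crossing} is \emph{constant} in $t$, its analytic continuation is trivial, and the boundary condition~\eqref{eq:Crossingboundary} collapses to the single algebraic identity
\begin{align*}
  \langle \psi_2\ot\psi_1,\,T(\psi_3\ot\psi_4)\rangle
  =
  \langle \psi_1\ot J_H\psi_4,\,T(J_H\psi_2\ot\psi_3)\rangle,
\end{align*}
required for all $\psi_1,\ldots,\psi_4\in\Hil$. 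First I would evaluate both sides on basis vectors $\psi_i=\delta_{a_i}$, using $T(\delta_x\ot\delta_y)=\delta_{\la_x(y)}\ot\delta_{\rho_y(x)}$. The left side is the indicator of $(a_2,a_1)=r(a_3,a_4)$, while the right side is the indicator of $(a_1,j(a_4))=r(j(a_2),a_3)$, so crossing symmetry says precisely that these two relations on $X^4$ coincide.

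The second step extracts non-degeneracy from the forward implication of this biconditional. Fixing $x,u\in X$ and setting $a_3=x$, $a_4=u$, $a_2=\la_x(u)$, $a_1=\rho_u(x)$ makes the left relation hold, so the right one must hold as well; reading off its second component gives
\begin{align*}
  \rho_x\big(j(\la_x(u))\big)=j(u),\qquad x,u\in X,
\end{align*}
i.e. $\rho_x\circ j\circ\la_x = j$ as self-maps of $X$. Since $j$ is a bijection, the composite $\rho_x\circ j\circ\la_x$ is bijective; because $X$ is finite, bijectivity of the composite forces $\la_x$ to be injective and $\rho_x$ to be surjective, and an injective (resp.\ surjective) self-map of a finite set is automatically bijective. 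Hence both $\la_x$ and $\rho_x$ are bijections, which is exactly non-degeneracy of $r$.

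Finally, with $\la_x$ and $\rho_x$ invertible, the identity $\rho_x\circ j\circ\la_x=j$ can be solved directly: right-composing with $\la_x^{-1}$ and then $j^{-1}$ and using $j=j^{-1}$ gives $\rho_x=j\circ\la_x^{-1}\circ j^{-1}=j\la_x^{-1}j$, the asserted formula. I expect the main obstacle to be the extraction of non-degeneracy rather than the collapse of crossing symmetry, which is routine once $\Delta_H=1$ is used: the point is that one must not assume $r$ (or $\la_x$, $\rho_x$) bijective at the outset, since that is precisely what is being proved, and must instead obtain it from $\rho_x\circ j\circ\la_x=j$ together with the finiteness of $X$. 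For infinite $X$ the same conclusion should follow from a fiber-counting argument comparing the coordinate projections of the relation $R=\{(a_1,a_2,a_3,a_4):(a_2,a_1)=r(a_3,a_4)\}$, but the finite case handled here is cleaner.
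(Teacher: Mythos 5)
Your proof is correct, and its first step coincides with the paper's: since $\Delta_H=1$, the crossing function is constant and the boundary condition collapses to the pointwise identity $\delta_{x_2,\la_{x_3}(x_4)}\delta_{x_1,\rho_{x_4}(x_3)}=\delta_{x_1,\la_{j(x_2)}(x_3)}\delta_{j(x_4),\rho_{x_3}(j(x_2))}$ on basis vectors. Where you genuinely diverge is in extracting bijectivity. You use only the forward implication, which yields $\rho_x\circ j\circ\la_x=j$, and then invoke finiteness of $X$ (an injective, resp.\ surjective, self-map of a finite set is bijective). The paper instead uses both directions of the identity: choosing $x_1,x_2$ to make the left side equal to $1$ gives $\rho_x j\la_x j=\id_X$, while choosing $x_1,x_4$ to make the right side equal to $1$ gives $\la_x j\rho_x j=\id_X$; together these exhibit $j\la_x j$ as a two-sided inverse of $\rho_x$, so bijectivity and $\rho_x=j\la_x^{-1}j$ follow purely algebraically, without any appeal to $|X|<\infty$. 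Since the example carries the standing assumption that $X$ is finite, your argument is fully adequate here; but note that for infinite $X$ the forward implication alone is genuinely insufficient (from $\rho_x j\la_x=j$ one only gets $\la_x$ injective and $\rho_x$ surjective, and a one-sided inverse need not be two-sided on an infinite set), so the sketched fiber-counting extension would not close the gap — the reverse implication is what supplies the missing surjectivity of $\la_x$ and injectivity of $\rho_x$.
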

\begin{proof}
  Let $x_1,\ldots,x_4\in X$. By crossing symmetry, the constant function 
  \begin{align*}
    f(t)
    &:=
    \langle x_2\ot x_1,(\Delta_H^{it}\ot 1)T(1\ot\Delta_H^{-it})(x_3\ot x_4)\rangle
    \\
    &=
    \delta_{x_2,\la_{x_3}(x_4)}\delta_{x_1,\rho_{x_4}(x_3)}
  \end{align*}
  must analytically continue to 
  \begin{align*}
    f(t+\tfrac{i}{2})
    &:=
    \langle x_1\ot J_Hx_4,(1\ot\Delta_H^{it})T(\Delta_H^{-it}\ot1)(J_Hx_2\ot x_3)\rangle
    \\
    &=
    \delta_{x_1,\la_{j(x_2)}(x_3)}\delta_{j(x_4),\rho_{x_3}(j(x_2))},
  \end{align*}
  i.e. we obtain the condition $\delta_{x_2,\la_{x_3}(x_4)}\delta_{x_1,\rho_{x_4}(x_3)}=\delta_{x_1,\la_{j(x_2)}(x_3)}\delta_{j(x_4),\rho_{x_3}(j(x_2))}$.
  
  Setting $x_1:=\rho_{x_4}(x_3)$ and $x_2:=\la_{x_3}(x_4)$ yields $\id_X=\rho_x j\la_x j$, and setting $x_1:=\la_{j(x_2)}(x_3)$ and $x_4:=j\rho_{x_3}(j(x_2))$ yields $\id_X=\la_xj\rho_x j$ for all $x\in X$. These equations clearly imply that both $\la_x$ and $\rho_x$ are bijections satisfying $\rho_x=j\la_x^{-1}j$. In particular, $r$ is non-degenerate.
%
\end{proof}

This observation can be seen as another motivation or derivation of non-degeneracy from crossing symmetry.

Examples of crossing-symmetric set-theoretic solutions as in this proposition are permutation solutions, namely maps $r(x,y)=(\pi(y),\pi^{-1}(x))$, with $\pi:X\to X$ a bijection commuting with $j$.
\end{example}

\begin{example}{\bf (Crossing-symmetric solutions of the YBE with spectral parameter \cite{CorreaDaSilvaLechner:2023})}\label{example:crossqft}\\
  We now revisit Example~\ref{ex:specpar}, which was built on the vector-valued $L^2$-space $\Hil=L^2(\Rl\to V)=L^2(\Rl,d\te)\ot V$, with $V$ a finite-dimensional Hilbert space. We describe a standard subspace of tensor product form $H=H_0\ot L$, where $H_0\subset L^2(\Rl,d\te)$ and $L\subset V$ are both standard subspaces. 
  
  For $H_0$, we take the standard subspace described in \eqref{hardyjd}, and for $L$, we take a maximally abelian standard subspace, as in the previous example. Concretely, this amounts to  the modular data
  \begin{align}
    (\Delta_H^{it}\psi)(\te)=\psi(\te-2\pi t),
    \qquad 
    (J_H\psi)(\te)=J_L\psi(\te),
  \end{align}
  where $J_L$ is an antiunitary involution on $V$. The underlying standard subspace consists of all elements $h$ of the vector-valued Hardy space $H^2({\mathbb S}_\pi)\ot V$ satisfying $h(\te+i\pi)=J_Lh(\te)$.

  The twists $T_S$ \eqref{eq:TS} considered in Example~\ref{ex:specpar} are then automatically compatible with $H$ because the function $S$ only depends on differences of the variables $\te_1,\te_2$. Crossing symmetry is satisfied when $S:\Rl\to\B(V\ot V)$ has a holomorphic and bounded extension to the strip ${\mathbb S}_\pi$, with the boundary values satisfying
  \begin{align*}
    \langle v_2\ot v_1,S(t+i\pi)\,v_3\ot v_4\rangle 
    =
    \langle v_1\ot jv_4,S(-t)\,jv_2\ot v_3\rangle 
    ,\qquad
    t\in\Rl,
  \end{align*}
  for all $v_1,\ldots,v_4\in V$.
  
  In this setting, our abstract crossing symmetry coincides with the crossing symmetry from scattering theory, and specifically with crossing symmetry in integrable quantum field theories. Various examples of functions $S$ satisfying crossing and the Yang-Baxter equation with spectral parameter are known, although a complete classification has not been reached yet.
  
  In the case of scalar particles, given by $V=\Cl$, the Yang-Baxter equation becomes trivial. If one then also asks $T_S$ to be unitary, the possible functions $S$ are exactly the inner functions on the strip $0<{\rm Im}\te<\pi$ that satisfy the two symmetry conditions $S(-\te)=\overline{S(\te)}=S(\te+i\pi)$, $\te\in\Rl$ \cite{Lechner:2008}.
  
  For some specific examples for $\dim V>1$, see \cite{AbdallaAbdallaRothe:2001, AlazzawiLechner:2016}.
\end{example}

\subsection{The internal structure of twisted Araki-Woods algebras}

While the results in the previous section clarified under which conditions on $(T,H)$, the Fock vacuum is cyclic and separating for the twisted Araki-Woods algebra $\CL_T(H)$, they do not address the internal structure of these algebras. 

The case of the twist $T=qF$, with $-1<q<1$, has been considered in most detail in the literature. Note that this twist is automatically compatible with any standard subspace, and the Yang-Baxter equation and crossing symmetry are satisfied. 

In that case, the structure of $\CL_T(H)$ is well understood:

\begin{thm}{\bf \cite{KumarSkalskiWasilewski:2023}}\label{thm:inn1}
  Let $-1<q<1$ and let $H\subset\Hil$ be an arbitrary standard subspace with $\dim H\geq2$. Then $\CL_{qF}(H)$ is a non-injective factor of type 
  \begin{align*}
    \begin{cases}
      \text{\em III}_1 & \text{if}\; G=\Rl_*^\times\\
      \text{\em II}_\la & \text{if}\; G=\la^\Zl,\;0<\la<1\\
      \text{\em II}_1 & \text{if}\; G=\{1\}
    \end{cases},
  \end{align*}
  where $G\subset\Rl_*^\times$ is the closed subgroup generated by the spectrum of $\Delta_H$. If $\dim H<\infty$, then these factors are solid and full.
\end{thm}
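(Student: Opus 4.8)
The plan is to read off the type from Tomita--Takesaki data and then treat factoriality, non-injectivity, and (in finite dimensions) solidity and fullness as separate pillars. First I would record that, by the modular formulas displayed above, the modular operator of $(\CL_{qF}(H),\Om)$ is the second quantisation $\Delta_{qF,H}=\Gamma(\Delta_H)$ acting diagonally on each $n$-particle space, so that the modular automorphism group is $\sigma_t=\mathrm{Ad}\,\Gamma(\Delta_H^{it})$. The type should then follow from Connes' invariants: I would compute the Connes spectrum and show that $S(\CL_{qF}(H))\cap\Rl_*^\times$ equals the closed multiplicative subgroup $G$ generated by $\mathrm{Sp}(\Delta_H)$, exactly as for free and Gaussian quasi-free states, where the Arveson/Connes spectrum of the modular flow reduces to the group generated by the single-particle modular spectrum. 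Granted factoriality, Connes' classification then produces type $\mathrm{III}_1$ when $G=\Rl_*^\times$, type $\mathrm{III}_\la$ when $G=\la^\Zl$, and type $\mathrm{II}_1$ in the tracial case $\Delta_H=1$ where $G=\{1\}$, matching the stated trichotomy.

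For factoriality I would first dispatch the almost periodic case, i.e. $\Delta_H$ with pure point spectrum. There $\om=\langle\Om,\,\cdot\,\Om\rangle$ is an almost periodic state, and by Connes' discrete decomposition $\CL_{qF}(H)$ is generated by its centraliser $\CL_{qF}(H)_\om$ together with eigenoperators implementing the modular spectrum; the crux is to show $\CL_{qF}(H)_\om$ is a $\mathrm{II}_1$ factor. I would model the centraliser on a tracial $q$-Gaussian-type algebra and transport the known factoriality of $q$-Gaussian algebras (Bożejko--Speicher, Ricard, Nou) to this setting via a central-sequence and conditional-expectation argument organised by the number-operator grading. The general, not almost periodic, case -- which yields the genuinely type $\mathrm{III}_1$ factors -- I would reach by a spectral-gap argument bounding the relative commutant of the centraliser, approximating the spectral data of $\Delta_H$ by almost periodic data while keeping $G$ dense.

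For the remaining structural properties: non-injectivity follows because $\dim H\geq2$ lets me embed a non-injective subalgebra with a normal conditional expectation -- concretely a $q$-Gaussian or free subalgebra in the spirit of Nou and Shlyakhtenko -- so $\CL_{qF}(H)$ cannot be hyperfinite. For $\dim H<\infty$, solidity and fullness I would extract from Ozawa's bi-exactness machinery: I would verify the Akemann--Ostrand property for the creation/annihilation operators, which in the deformed setting rests on the weak amenability/CMAP of $q$-deformed algebras (Avsec) together with compactness of the relevant commutators. Solidity is then immediate from Ozawa's theorem, and fullness follows from solidity combined with the spectral-gap/type-$\mathrm{III}$ structure, which rules out non-trivial (asymptotically) central sequences.

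The main obstacle I expect is factoriality together with the exact type computation in the dense, non-almost-periodic regime: there the clean crossed-product picture is unavailable, so one must control the Connes spectrum and the relative commutant of the centraliser directly through spectral-gap estimates. A secondary difficulty is upgrading the Akemann--Ostrand/bi-exactness property from the tracial $q$-Gaussian case to the full $q$-deformed Araki--Woods setting that solidity requires.
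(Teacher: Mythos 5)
The paper itself contains no proof of this theorem: it is quoted verbatim from Kumar--Skalski--Wasilewski \cite{KumarSkalskiWasilewski:2023}, and the surrounding text tells you exactly which route that proof takes --- the existence of a \emph{conjugate system} (non-tracial free Fisher information) for the $q$-Araki--Woods generators, built on Miyagawa--Speicher's construction of conjugate variables for $q$-Gaussians for all $q$ \cite{MiyagawaSpeicher:2023} and on Nelson's non-tracial transport framework \cite{Nelson:2015}. Once a conjugate system is in hand, factoriality, non-injectivity, and (for $\dim H<\infty$) fullness and absence of Cartan-type pathologies come out of one machine. Your proposal takes a genuinely different route: Connes' discrete decomposition over the almost periodic part, factoriality of the centralizer modelled on tracial $q$-Gaussians, spectral-gap control of the relative commutant in the non-almost-periodic case, and Ozawa bi-exactness for solidity. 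That is precisely the pre-2022 program (Hiai, Shlyakhtenko, Nelson, Bikram--Mukherjee), which is known to succeed only under extra hypotheses on $q$ or on the spectrum of $\Delta_H$.

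The genuine gap is therefore at the center of your argument: the step ``show $\CL_{qF}(H)_\om$ is a $\mathrm{II}_1$ factor and bound the relative commutant of the centraliser by a spectral-gap argument'' is exactly the point at which the classical approach broke down for two decades, and you offer no mechanism for it --- you correctly flag it as ``the main obstacle'' but leave it unresolved, so the proof does not close. Two smaller issues: fullness does not follow formally from ``solidity plus non-injectivity'' outside the $\mathrm{II}_1$ setting (for type $\mathrm{III}$ one must control the asymptotic centralizer, which in \cite{KumarSkalskiWasilewski:2023} is again done via the conjugate system); and your embedding argument for non-injectivity needs a modular-invariant subalgebra to get a normal conditional expectation (Takesaki), which should be said explicitly. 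One point in your favour: the type in the case $G=\la^{\Zl}$ is indeed $\mathrm{III}_\la$, as you write; the ``$\mathrm{II}_\la$'' in the displayed statement is a typo.
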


This recent theorem of Kumar, Skalski and Wasilewski settled in particular the long-standing question of factoriality of $\CL_{qF}(H)$ for all $q$ and all $H$. It builds on important previous work by many authors, including in particular Miyagawa and Speicher \cite{MiyagawaSpeicher:2023} and Nelson \cite{Nelson:2015}. We refer to \cite{KumarSkalskiWasilewski:2023} for a detailed description and references regarding the history of the factoriality problem of the $q$-twisted Araki-Woods factors.

Even more recently, Yang has generalized these methods to more general twists \cite{Yang:2023}, namely arbitrary compatible braided crossing-symmetric twists on finite dimensional spaces:

\begin{thm}{\bf \cite{Yang:2023}}
  Let $H\subset\Hil$ be a finite-dimensional standard subspace.
  \renewcommand{\labelenumi}{\normalfont\alph{enumi})}
  \begin{enumerate}
   \item  Let $T$ be a compatible braided crossing-symmetric twist with $\|T\|<1$. Then $\CL_T(H)$ is a factor. The type of this factor is determined by the closed subgroup $G\subset\Rl_*^\times$  generated by the spectrum of $\Delta_H$ exactly as in~Thm.~\ref{thm:inn1}.
   \item There exists a constant $q_H>0$ such that for any compatible braided crossing symmetric twist with $\|T\|<q_H$, the twisted Araki-Woods algebra $\CL_T(H)$ is isomorphic to the free Araki-Woods algebra $\CL_0(H)$.
  \end{enumerate}
\end{thm}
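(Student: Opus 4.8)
The plan is to follow the strategy behind Theorem~\ref{thm:inn1} of Kumar--Skalski--Wasilewski \cite{KumarSkalskiWasilewski:2023} and Miyagawa--Speicher \cite{MiyagawaSpeicher:2023}, reducing the general braided case to a tracial one, and to use the crossing symmetry and Yang--Baxter equation to replace the special features of the flip $F$ that were available in the $q$-deformed case.

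For part a) I would first set up modular theory. Since $\|T\|<1$ and $T$ is braided, $T$ is a strict twist by Theorem~\ref{theorem:T}~c), so all field operators $\phi_{L,T}(h)$ are bounded \cite{BozejkoSpeicher:1994}; since in addition $T$ is crossing symmetric and satisfies the Yang--Baxter equation, $\Om$ is cyclic and separating for $\CL_T(H)$ by Theorem~\ref{thm1}. The modular automorphism group is then $\sigma_t=\mathrm{Ad}\,\Delta_{T,H}^{it}$, and by the second-quantization formula for $\Delta_{T,H}^{it}$ recorded above it acts as the second quantization $\Gamma(\Delta_H^{it})$ of the one-particle modular flow. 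In particular $\sigma$ is completely determined by $\Delta_H$, which already identifies the candidate type: evaluating Connes' $S$- and $T$-invariants on $\Gamma(\Delta_H^{it})$ reproduces the trichotomy of Theorem~\ref{thm:inn1} according to whether the closed subgroup $G\subset\Rl_*^\times$ generated by $\mathrm{spec}\,\Delta_H$ is $\{1\}$, $\la^\Zl$, or all of $\Rl_*^\times$ --- \emph{provided} $\CL_T(H)$ is a factor.

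The core of part a) is therefore factoriality. I would reduce the type~III cases to the tracial case by passing to the continuous core, i.e.\ the crossed product of $\CL_T(H)$ by the modular flow $\sigma$: because $\sigma_t=\Gamma(\Delta_H^{it})$ is a second quantization, this core should be identifiable with a \emph{tracial} twisted Araki--Woods algebra over a dilated one-particle space, on which $T$ is inflated to a braided crossing-symmetric twist compatible with a maximally abelian (hence tracial) standard subspace. Factoriality of $\CL_T(H)$ and its precise type then follow from factoriality of this core together with ergodicity of the dual flow. The remaining, and hardest, step is tracial factoriality for a general braided $T$ with $\|T\|<1$: here one must run the conjugate-variable and free-probabilistic estimates of \cite{MiyagawaSpeicher:2023} without the $S_n$-symmetry that the flip provides. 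The right creation operators, which exist precisely because $T$ obeys the Yang--Baxter equation and give $\CL_T(H)'=\CR_T(H')$, take the role of the commuting ``tilde'' variables, while the uniform spectral gap $\inf_n\inf\mathrm{spec}\,P_{T,n}>0$ valid for $\|T\|<1$ \cite{BozejkoSpeicher:1994} replaces the explicit $q$-combinatorics. Combining these with a Popa-type deformation/rigidity or Akemann--Ostrand argument should give solidity and hence triviality of the center. I expect the control of the combinatorics of $P_{T,n}$ for non-flip $T$ to be the genuine technical obstacle.

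For part b) I would invoke free transport in the non-tracial setting \cite{Nelson:2015}. The free Araki--Woods algebra $\CL_0(H)$ is the $\|T\|=0$ point of the family, and for small $\|T\|$ the algebra $\CL_T(H)$ is a quasi-free perturbation whose defect in the associated Schwinger--Dyson system has norm of order $\|T\|$. Solving the corresponding free transport equation by a fixed-point argument that converges once the perturbation is sufficiently small then yields a state-preserving $*$-isomorphism $\CL_T(H)\to\CL_0(H)$. The threshold $q_H>0$ is exactly the radius of convergence of this construction and depends on $H$ through $\dim H$ and through the covariance encoded by $\Delta_H$. The obstacle here is analytic: establishing the Lipschitz and analyticity estimates for the twisted Schwinger--Dyson equation in the non-tracial regime that guarantee the transport map exists for all $\|T\|<q_H$.
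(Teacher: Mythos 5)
First, a point of comparison: this survey does not prove the quoted theorem. It is stated as a result of Yang \cite{Yang:2023}, building on \cite{KumarSkalskiWasilewski:2023,MiyagawaSpeicher:2023,Nelson:2015}, so there is no in-paper argument to measure yours against. Judged on its own terms, your text is a strategy outline rather than a proof: in both parts you correctly locate the hard step and then explicitly defer it (``the genuine technical obstacle'', ``the obstacle here is analytic''). The substance of the theorem --- the conjugate-variable estimates for a general braided crossing-symmetric $T$, and the convergence of the transport construction that produces $q_H$ --- is therefore not established, only named.

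Beyond incompleteness, one step of your plan for part a) would fail as written. You propose to ``reduce the type III cases to the tracial case by passing to the continuous core'' and to deduce factoriality of $\CL_T(H)$ ``from factoriality of this core''. For a type $\mathrm{III}_\la$ factor with $0<\la<1$ the continuous core is \emph{not} a factor (its center is $L^\infty$ of a circle on which the dual flow acts by rotation), so factoriality of the core cannot be the input; what is needed is ergodicity of the dual flow on the center of the core, and identifying that center already presupposes most of what you want to prove. Moreover, your identification of the core as a tracial twisted Araki--Woods algebra over a dilated one-particle space with an ``inflated'' twist is asserted, not shown: it is not clear that the inflated operator is again a braided twist compatible with a maximally abelian standard subspace. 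The route actually taken in \cite{KumarSkalskiWasilewski:2023} and \cite{Yang:2023} avoids the core altogether: one constructs a conjugate system for the generators $\phi_{L,T}(e_i)$ directly in the non-tracial setting (free difference quotients with respect to the free quasi-free state, with the right creation/annihilation operators and the uniform lower bound on $P_{T,n}$ playing exactly the roles you describe), and factoriality, fullness and the type computation via Connes' invariants all flow from that single construction. Your part b) does identify the correct mechanism --- Nelson's non-tracial free monotone transport \cite{Nelson:2015}, which is indeed how \cite{Yang:2023} proceeds --- but the Lipschitz and analyticity estimates for the twisted Schwinger--Dyson equation, which are what determine the threshold $q_H$, are precisely the content of the statement and are absent from your argument.
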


Furthermore, it is shown in \cite{Yang:2023} that $\CL_T(H)$ is non-injective under a spectral density condition on $\Delta_H$ (this result does not require $\dim\Hil<\infty$).

It must be noted that the above results do not hold for $\|T\|=1$. For example, for $T=F$ we have the center $\CL_F(H)\cap\CL_F(H)'=\CL_F(H\cap H')$, which is typically non trivial \cite{LeylandsRobertsTestard:1978}.

\section{Inclusions of twisted Araki-Woods algebras and applications in constructive QFT}\label{sec:QFT}

In this section we sketch how twisted Araki-Woods algebras appear in the construction of integrable quantum field theories on two-dimensional Minkowski spacetime ${\mathbb R}^2$. We will have to confine ourselves to the main ideas, and refer to the review \cite{Lechner:AQFT-book:2015} for more details.

Out of the various axiomatizations of QFT, the operator-algebraic approach \cite{Haag:1996,Araki:1999,Advances-AQFT:2015} is most useful here. In this setting, one models a quantum field theory on $\Rl^2$ by a net of local von Neumann algebras $\CO\mapsto\CA(\CO)$ on a vacuum Hilbert space $\CV$, that is a collection of von Neumann algebras $\CA(\CO)\subset\CB(\CV)$ indexed by (a suitable subset of) all open sets $\CO\subset\Rl^2$.

The minimal physical requirements are that $\CV$ carries a unitary positive energy representation $U$ of the Poincaré group $P$ of $\Rl^2$ and an invariant vector $\Om\in\CV$ (the vacuum vector), such that the following properties hold:
\renewcommand{\labelenumi}{\normalfont\alph{enumi})}
\begin{enumerate}
  \item (Isotony): $\CO_1\subset\CO_2\Rightarrow\CA(\CO_1)\subset\CA(\CO_2)$,
  \item (Locality): $\CA(\CO_1)$ and $\CA(\CO_2)$ commute elementwise if $\CO_1$ lies spacelike to $\CO_2$,
  \item (Covariance): $U(g)\CA(\CO)U(g)^{-1}=\CA(g\CO)$ for every $g\in P$,
  \item (Reeh-Schlieder property): The vacuum vector $\Om$ is cyclic for every $\CO$ with interior points. It is then separating for all $\CO$ such that the causal complement $\CO'$ has interior points.
\end{enumerate}

The task of constructive algebraic QFT is to describe explicit examples of such data based on physical input (see \cite{Summers:2011}). In the case at hand, the aim is to construct a quantum field theory with a presribed elastic two-particle S-matrix.

Such an S-matrix amounts exactly to twists of the form discussed in Examples~\ref{ex:specpar} and \ref{example:crossqft}: The physical meaning of the variable $\te$ is the rapidity of a massive particle, and the fact that $S$ only depends on differences reflects Lorentz symmetry. Our abstract crossing symmetry captures precisely the crossing symmetry of scattering theory in this case, and the modular conjugation $J_L$ of the internal space $L$ models charge conjugation.

As explained in Example~\ref{example:bw}, the unitaries acting by translations in the rapidity form the modular group of a standard subspace that models localization in the Rindler wedge $W\subset\Rl^2$. We may therefore begin by defining the observable algebras of our QFT by setting $\CA_S(W):=\CL_{T_S}(H(W))$, where $T_S$ is the twist based on the two-particle S-matrix $S$, and $H(W)$ the standard subspace given by the wedge $W$.

It then turns out one can easily define the observable algebras for all Poincaré transformed wedges $\Lambda W+x$ by covariance: For translates of $W$, one gets left twisted Araki-Woods algebras $\CL_{T_S}(H(W+x))$, and for the opposite wedges $-W+x$, one arrives at right twisted Araki-Woods algebras $\CR_{T_S}(H(-W+x))$. The observable algebras for bounded regions, such as intersections of two opposite wedges, are then formed by intersecting left and right Araki-Woods algebras, namely the relative commutants $\CA_S(W\cap(-W+x))$ of the inclusions $\CL_{T_S}(H(W+x))\subset\CL_{T_S}(H(W))$, $x\in W$. This construction is perfectly covariant and local, but it is difficult to explicitly exhibit elements of $\CA_S(W\cap(-W+x))$. 

Depending on the details of $S$, it has been shown that the local observable algebra $\CA_S(W\cap(-W+x))$ contains non-trivial operators (functions of the quantum fields defining the model), see \cite{Lechner:AQFT-book:2015} for an overview of results. Once this existence of local observables is settled, one can also prove that the constructed QFT is indeed integrable in the sense that no particle production processed occur in scattering, the $n$-particle S-matrix factorizes into two-particle collisions, and the two-particle S-matrix is given by $S$. Hence this construction solves the inverse scattering problem for $S$.

From the abstract point of view taken for most of this review, what is currently missing is a general understanding for which twists $T$ and for which inclusions $K\subset H$ of standard subspaces the inclusion of von Neumann algebras $\CL_T(K)\subset\CL_T(H)$ has a large relative commutant, for instance in the sense that the Fock vacuum is cyclic for it. The analysis of these inclusions is therefore a subject of ongoing research.

\subsection*{Acknowledgment}

The author thanks the organizers of the XL Workshop on Geometric Methods in Physics 2023 for running a stimulating and enjoyable conference, and Ricardo Correa da Silva for an inspiring and productive collaboration. Financial support by the WGMP and the German Research Foundation DFG through the Heisenberg project ``Quantum Fields and Operator Algebras'' (LE 2222/3-1) is gratefully acknowledged.

\end{document}